\colorlet{shadecolor}{yellow}
\newtheorem{theorem}{Theorem}
   \newtheorem{lemma}{Lemma}
\newtheorem{definition}{Definition}
\newcommandx*{\LDAUOmicron}[2][1=@pkling_false]{\mathcal{O}\ifthenelse{\equal{#1}{small}}{\bigl(#2\bigr)}{\left(#2\right)}}
\newcommandx*{\LDAUomicron}[2][1=@pkling_false]{\mathrm{o}\ifthenelse{\equal{#1}{small}}{\bigl(#2\bigr)}{\left(#2\right)}}
\newcommandx*{\LDAUOmega}[2][1=@pkling_false]{\Omega\ifthenelse{\equal{#1}{small}}{\bigl(#2\bigr)}{\left(#2\right)}}
\newcommandx*{\LDAUomega}[2][1=@pkling_false]{\omega\ifthenelse{\equal{#1}{small}}{\bigl(#2\bigr)}{\left(#2\right)}}
\newcommandx*{\LDAUTheta}[2][1=@pkling_false]{\Theta\ifthenelse{\equal{#1}{small}}{\bigl(#2\bigr)}{\left(#2\right)}}
\newcommandx*{\set}[2][2=@pkling_false]{\left\{#1\ifthenelse{\equal{#2}{@pkling_false}}{}{\;\middle|\;#2}\right\}}
\newcommand{\calO}{\mathcal{O}}
\newcounter{algorithmNumber}
\newcounter{actionNumber}
\newcounter{symactionNumber}
\newcommand{\viewingradius}{11}
\newcommand{\pipelininginterval}{13}
\newcommand{\crossdistance}{3}
\newcommand{\figscale}{1}
\title{Gathering a Closed Chain of Robots on a Grid}
\author{
    Sebastian Abshoff
    \and
    Andreas Cord-Landwehr
    \and
    Matthias Fischer
    \and
    Daniel Jung
    \and
    Friedhelm Meyer auf der Heide
}
\date{
    Heinz Nixdorf Institute \& Computer Science Department\\[0.2em]
    University of Paderborn (Germany)\\[0.2em]
    Fürstenallee 11, 33102 Paderborn\\[0.2em]
\quad\\[0.2em]
    \texttt{\{abshoff,cola,mafi,daniel.jung,fmadh\}@uni-paderborn.de}
}
\begin{document}
\bibliographystyle{alphadin}
\maketitle
\thispagestyle{empty}
\begin{abstract}
    We consider the following variant of the two dimensional gathering problem for swarms of robots:
    Given a swarm of $n$ indistinguishable, point shaped robots on a two dimensional grid.
    Initially, the robots form a closed chain on the grid and must keep this connectivity during the whole process of their gathering.
    Connectivity means, that neighboring robots of the chain need to be positioned at the same or neighboring points of the grid.
    In our model, gathering means to keep shortening the chain until the robots are located inside a $2\times 2$ subgrid.
    Our model is completely local (no global control, no global coordinates, no compass, no global communication or vision, \ldots).
    Each robot can only see its next constant number of left and right neighbors on the chain.
    This fixed constant is called the \emph{viewing path length}.
    All its operations and detections are restricted to this constant number of robots.
    Other robots, even if located at neighboring or the same grid point cannot be detected.
    Only based on the relative positions of its detectable chain neighbors, a robot can decide to obtain a certain state.
    Based on this state and their local knowledge, the robots do local modifications to the chain by moving to neighboring grid points without breaking the chain.
    These modifications are performed without the knowledge whether they lead to a global progress or not.
    We assume the fully synchronous $\mathcal{FSYNC}$ model.
    For this problem, we present a gathering algorithm which needs linear time.
    This result generalizes the result from \cite{hopper}, where an open chain with specified distinguishable (and fixed) endpoints is considered.
\end{abstract}
\textbf{Keywords: Gathering problem, Autonomous robots, Distributed algorithms, Local algorithms, Mobile agents, Runtime bound, Swarm formation problems}
\section{Introduction}
    Over the last years, there was a growing interest in problems related to the creation of formations by autonomous robots.
    A benchmark problem is gathering:
    Given a configuration of $n$ autonomous robots in the plane or on the grid, they have to gather in one position not specified beforehand.
    In this paper, we consider a closed chain of $n$ robots on a two-dimensional grid.
    Initially, the robots form an arbitrary closed chain and keep this connectivity during the whole process of gathering.
    Neighboring robots of the chain need to be positioned at horizontally or vertically neighboring points or the same point of the grid.
    We assume that initially no two chain neighbors are located at the same grid point.
    In our model, gathering means to keep shortening the chain until the robots have gathered.
    Our model is completely local (no global control, no global coordinates, no compass, no global communication or vision, \ldots).
    
    The robots have only local vision.
    Each of the robots can only see the subchain, consisting of itself and its next $\viewingradius$ chain neighbors, including their relative positions, in both directions along the chain.
    We call this the robot's \emph{viewing range}, while $\viewingradius$ being the \emph{viewing path length}.
    The locality of our strategy is given by restricting all recognitions and actions of a robot to its constant sized viewing range.
    Other robots cannot be detected by a robot, even if they are located at the same or neighboring grid points and furthermore do not restrain its movements.
    The robots are indistinguishable, i.e., they do not have ids.
    
    We use the fully synchronous time model $\mathcal{FSYNC}$.
    Time is subdivided into equally sized rounds of constant lengths.
    In every round all robots simultaneously execute their operations in the common \emph{look-compute-move} model \cite{Cohen:2004a} which divides one operation into three steps.
    In the \emph{look} step the robot gets a snapshot of the current scenario from its own perspective, restricted to its constant sized viewing range.
    During the \emph{compute} step, the robot computes its action, and eventually performs it in the \emph{move} step.
    
    A robot can change its position to its horizontal, vertical or diagonal neighboring grid points, while our algorithm ensures that this does not disconnect the chain.
    We say that the robot \emph{hops} or performs a \emph{hop} to the according position.
    A \emph{hop} can modify the structure of the chain. Figure~\ref{fig:robot-actions_ex} shows an example for this.
    If after the hop two chain neighbors are located at the same grid point, then from then on, we want them to behave like one robot.
    In our model, their neighborhoods are merged and one of both is removed.
    This \emph{merge} operation shortens the chain by removing robots.
    This is also our progress measurement of the gathering.
    (Note that robots which are located at the same grid point, but not being neighbors on the chain, are not merged/removed by this operation!)
    \begin{figure}[h]
        \centering
        \includegraphics[scale=\figscale]{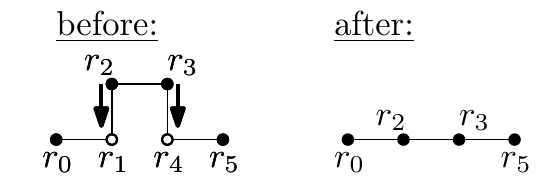}
        \caption{Significant example for shortening the chain (\emph{merge}): At the same time both $r_2$ and $r_3$ hop downwards while the other robots in the figure do not move.
        After the hops, $r_2$ is located at the same position as $r_1$ and $r_3$ is located at the same position as $r_4$.
        Then the neighborhoods merge and the white robots can be removed.}
        \label{fig:robot-actions_ex}
    \end{figure}
    The robots only have a constant memory.
    
    The strict locality of our robot model makes gathering challenging.
    For example, a given global vision or alternatively just the knowledge of a global compass, would make the gathering problem easier, because the robots could compute the center of the globally smallest enclosing square and just move to this point (global vision) or all robots without any local neighbors in front of them could simply move for example to the south-eastern direction and would finally meet (global compass).

    We present a strategy without such global information that solves the gathering problem in time $\calO(n)$.
    We say, gathering is solved when all robots are located within a $2\times 2$ square, since in our time and robot model this symmetry cannot be broken.
    (Note that our algorithm ensures that such symmetries can only occur if the initial number of robots was even.)
    Our result is asymptotically optimal for worst case closed chains.

    As mentioned, we want to do the gathering by shortening the chain.
    For the problem of shortening an \emph{open} chain between two fixed (unmovable) robots, the first shown runtime bound was $\calO(n^2\log(n))$ \cite{gtm}.
    Later, this has been improved \cite{hopper}:
    In the Euclidean plane, the \emph{Hopper} strategy delivers a $\sqrt{2}$-approximation of the shortest open chain in time $\calO(n)$.
    Restricted to a grid, the \emph{Manhattan Hopper} strategy delivers an optimal solution in time $\calO(n)$.

    Assuming that both endpoints of the open chain are located at the same position, this problem looks similar to our closed chain problem.
    The gathering of an open chain would furthermore be simple in general, as the endpoints are always locally distinguishable and would simply sequentially hop onto their inner neighbors.
    In our closed chain problem, none of the robots are distinguishable, so these solutions cannot be applied.

    Nevertheless, we adopt concepts of the Hopper strategies.
    Roughly speaking, these strategies work as follows:
    A fixed endpoint of the open chain sends out a state which then moves along the chain towards the other endpoint of the chain.
    A robot that currently has the state can execute some certain action which then leads to a stepwise shortening of the chain.
    The endpoint sends out such a moving state multiple times, until the robots' actions have, in case of the Manhattan Hopper, optimally shortened the chain and reduced the number of robots to a minimal needed number.

    Since the robots themselves are indistinguishable in our closed chain model, we instead do a distinction depending on the relative positions of the constant number of chain neighbors within their local viewing ranges.
    Using this criterion, the robots elect themselves to obtain the special role of sending out the states.
    Using this approach, we have to deal with several difficulties.
    \begin{itemize}
        \item At the same time, more than just one robot obtain the special role.
        \item Only some of them will actually achieve a shortening of the chain, but they cannot detect this because of the restricted local viewing ranges.
        \item The ones that do not achieve a shortening of the chain shall not hinder the work of the others.
        \item Because robots are moving, at different times different robots obtain the special role. So new states start from different robots.
    \end{itemize}
\section{Related Work.}
    There is a vast literature on robot problems, researching how specific coordination problems can be solved by a swarm of robots given a certain limited set of abilities.
    The robots are usually point-shaped (hence collisions are neglected) and positioned in the Euclidean plane.
    They can be equipped with a memory or are \emph{oblivious}, i.e., the robots do not remember anything from the past and perform their actions only on their current views.
    If robots are anonymous, they do not carry any IDs and cannot be distinguished by their neighbors.
    Another type of constraint is the compass model:
    If all robots have the same coordinate system, some tasks are easier to solve than if all robots' coordinate systems are distorted.
    In \cite{gathering-compasses,Izumi2012} a classification of these two and also of dynamically changing compass models, as well as their effects regarding the gathering problem in the Euclidean plane, is considered.
    The operation of a robot is considered in the \emph{look-compute-move model} \cite{Cohen:2004a}.
    How the steps of several robots are aligned is given by the \emph{time model}, which can range from an asynchronous $\mathcal{ASYNC}$ model (for example, see \cite{Cohen:2004a}), where even the single steps of the robots' steps may be interleaved, to a fully synchronous $\mathcal{FSYNC}$ model (for example, see \cite{localgathering}), where all steps are performed simultaneously.
    A collection of recent algorithmic results concerning distributed solving of basic problems like gathering and pattern formation, using robots with very limited capabilities, can be found in \cite{flocchinioverview}.
    
    One of the most natural problems is to gather a swarm of robots in a single point.
    Usually, the swarm consists of point-shaped, oblivious, and anonymous robots. The problem is widely studied in the Euclidean plane.
    Having point-shaped robots, collisions are understood as merges/fusions of robots and interpreted as gathering progress (cf.\ \cite{MINCH}).
    In \cite{gathering-icalp} the first gathering algorithm for the $\mathcal{ASYNC}$ time model with multiplicity detection (i.e., a robot can detect if other robots are also located at its own position) and global views is provided.
    Gathering in the local setting was studied in \cite{localgathering}.
    In \cite{impossibilityofgathering} situations when no gathering is possible are studied.
    The question of gathering on graphs instead of gathering in the plane was considered in \cite{practicalrendevouzaktuell, rendezvousingraphen, gatheringOnRing}.
    In \cite{Stefano2013} the authors assume global vision, the $\mathcal{ASYNC}$ time model and furthermore allow unbounded (finite) movements.
    They show optimal bounds concerning the number of robot movements for special graph topologies like trees and rings.
    
    Concerning the gathering on grids,
    in \cite{gatheringongrids} it is shown that multiplicity detection is not needed and the authors further provide a characterization of solvable gathering configurations on finite grids.
    In \cite{OptExactGatheringGrids2014}, these results are extended to infinite grids, assuming global vision.
    The authors characterize \emph{gatherable} grid configurations concerning exact gathering in a single point.
    Under their robot model and the $\mathcal{ASYNC}$ time model, the authors present an algorithm which gathers \emph{gatherable} configurations
    optimally concerning the total number of movements.
         
    In the $\mathcal{FSYNC}$ time model, the total running time is a quality measurement of an algorithm.
    In \cite{gatheringthetanquadrat}, an $\calO(n^2)$ runtime bound is shown for an algorithm working on point shaped robots without compass in the Euclidean plane in the $\mathcal{FSYNC}$ model.
    The vision of the robots is restricted to a local viewing radius of constant length only, instead of a global vision of the whole scenery.
    Because of the restricted vision, the robots cannot compute a global gathering point.
    Instead, the robots synchronously compute the smallest enclosing circle of the robots within their restricted viewing range and then move towards its center.
    The authors also prove that for their algorithm this bound is tight.
    For the problem itself, under this local model, a tight bound for the running time is still unknown.

    Our result of the underlying paper also works in a local robot model and in $\mathcal{FSYNC}$.
    The proven total running time $\calO(n)$ is tight concerning the gathering in our model.

    A preliminary version of this paper was published at \cite{Jung2015a}.
    Compared to our new paper, it had several drawbacks that our new algorithm solves:
    (1) robots needed a global clock,
    (2) robots needed a storage for saving a time stamp value, and
    (3) the target configuration was of arbitrary fixed size and not a $2\times 2$ square.
\section{Basic idea of the Algorithm}\label{sec:basicidea}
    Our measurement of progress is the shortening of the chain.
    Then, after $n-1$ shortenings the gathering is done.
    In our model, the chain can be shortened if two chain neighbors are located at the same grid point.
    Then, one of them is removed and for the other one the neighborhoods of both are combined and assigned to it,
    which results in the shortening of the chain.

    Because a robot's viewing range is restricted to just a constant size, these shortenings cannot be performed easily in general.
    Our algorithm performs two basic operations.
    More precisely, we have to deal with two cases:
    \begin{enumerate}
        \item Some neighboring robots perform a single hop such that afterwards at least two chain neighbors are located at the same position, while preserving the chain connectivity. Afterwards, the shortening as explained above can be performed.
        This is a so called \emph{merge} and further discussed in Subsection~\ref{ssec:merges}.\label{enum:easiershortening}
        \item If on some subchains merges are impossible, we perform so called reshapements to reshape the chain in order to prepare merges in succeeding steps.
        They are explained in Subsection~\ref{ssec:runs}.
    \end{enumerate}
    In Subsection~\ref{ssec:merges}, we now start with case \ref{enum:easiershortening}.
    For the sake of simpler descriptions, we use terms like \emph{horizontal, vertical, downwards, left,\ldots}
    Since our robots do not have a common sense for this, the descriptions/figures are also to be understood in a mirrored or rotated manner.
    \subsection{Merges}\label{ssec:merges}
        As introduced, the chain can be shortened if two chain neighbors are located at the same grid point.
        Figure~\ref{fig:ALG_merge2} shows how our algorithm manages that two chain neighbors get located at the same grid point.
        In the figure, the black robots hop downwards simultaneously.
        Afterwards, assuming that the length $k$ of the black subchain is bigger than $1$, the both outermost of them are located at the same grid point as the bordering white robots.
        Because at both ends the robots are chain neighbors, we can remove the white ones without breaking the chain.
        If $k=1$, then after the hop also both white robots can be removed without breaking the connectivity of the chain.
        We call these operations \emph{merge operations} or shorter: \emph{merges}.
        Merge operations can only be performed if all participating robots can see all black and white robots.
        This is necessary, because else not all of the black robots know that they must perform a hop in order to prevent the chain from breaking.
        So the length $k$ especially cannot be larger than a robot's constant viewing path length.
        \begin{figure}[h]
        \centering
            \includegraphics[scale=\figscale]{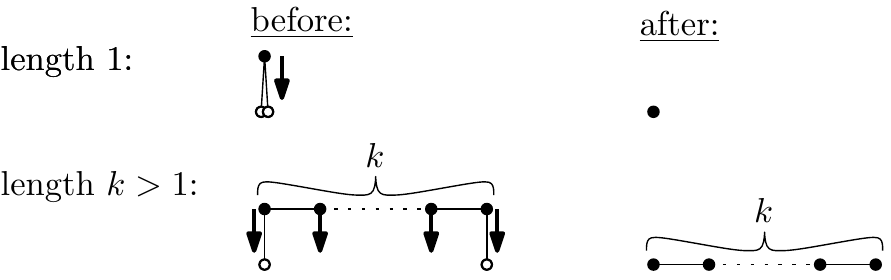}
            \caption{Subchains that allow progress hops (\emph{merge operations}). $k=1$: Here, the white robots actually are located at the same grid point and are drawn with a small distance just for presentation reasons. The value of $k$ indicates the length of the black subchain. $k$ is upper bounded by a robot's constant viewing path length.}
            \label{fig:ALG_merge2}
        \end{figure}
         
        It may happen that at the same time more than just one merge can be performed on different parts of the chain.
        Then, two cases need a closer look (We refer to the black and white robots of Figure~\ref{fig:ALG_merge2}.):
        \begin{enumerate}
            \item The merge subchains, including the white robots, overlap by two robots.\label{enum:oneoverlap}
            \item The merge subchains, including the white robots, overlap by three robots.\label{enum:twooverlap}
        \end{enumerate}
        An example for \ref{enum:oneoverlap}) can be seen in Figure~\ref{fig:symmerges}.$a)$.
        Here, all black robots perform the same hop as the black ones in Figure~\ref{fig:ALG_merge2}.
        The difference is that afterwards for example the robot $a$ is not located at the same position as $b$ and vice versa.
        So, there the chain cannot be shortened.
        But the outermost white robots of Figure~\ref{fig:symmerges}.$a)$ do not move.
        Here, the shortening can be performed.

        An example for \ref{enum:twooverlap}) is shown in Figure~\ref{fig:symmerges}.$b)$.
        Here, the robot $r$ belongs to the black robots of subchain $1$ and $2$.
        This requires an additional rule, because concerning subchain $1$ $r$ would hop downwards, but concerning subchain $2$ it would hop
        to the left.
        Instead, we let $r$ perform a diagonal hop to the lower left, while the other black robots perform their usual hops.
        Afterwards, $r,a,b$ are located at the same position and $a,b$ are removed without breaking the chain.

        Because in all three cases the merge operations lead to the removal of robots, i.e.., shortening of the chain, all satisfy our measurement of progress.
        \begin{figure}[h]
        \centering
            \includegraphics[scale=\figscale]{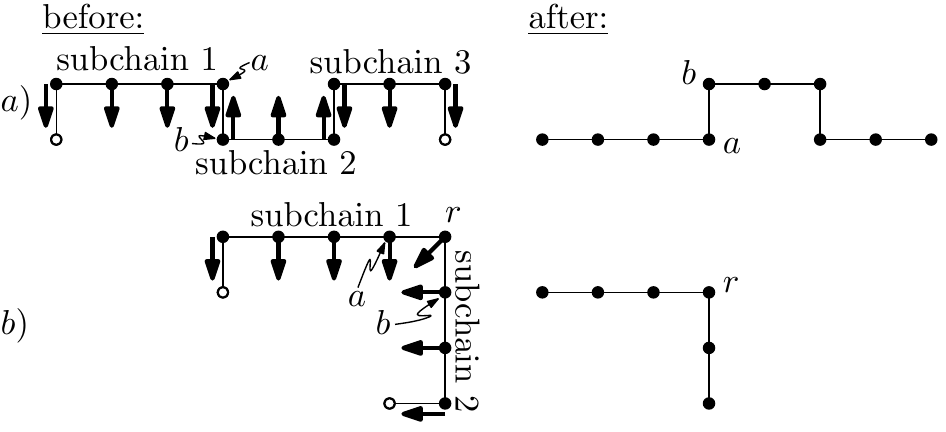}
            \caption{Cases for merges of subchains which are not node disjoint. (significant examples)}\label{fig:symmerges}
        \end{figure}
    \subsection{Reshapement of the chain, done by runners}\label{ssec:basicruns}
        \begin{figure}[h]
        \centering
            \includegraphics[scale=\figscale]{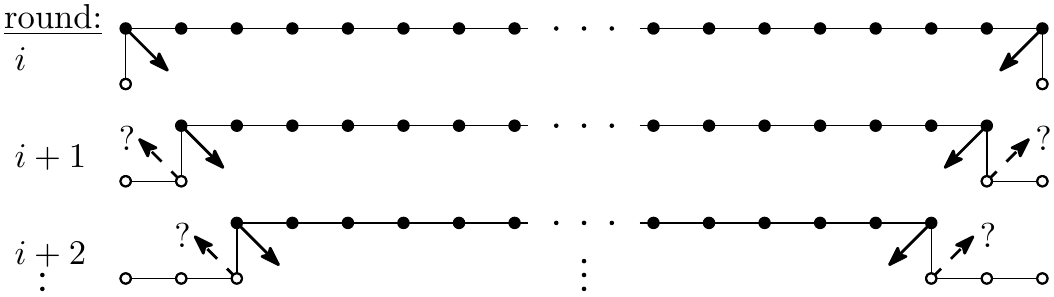}
            \caption{If the length of the black subchain (i.e., the value of $k$ in Figure~\ref{fig:ALG_merge2}) is larger than the robots' viewing path length, we shorten it by letting the outermost black robots perform diagonal hops. We indicate such hops by diagonal arrows.}
            \label{fig:goodpair_ex_simple_norunners}
        \end{figure}
        If because of the robot's restricted viewing ranges nowhere on the whole chain a \emph{merge} can be performed, then we call the chain a \emph{Mergeless Chain}.
        We perform so called \emph{reshapements} to reshape the chain in order to prepare merges in succeeding steps:
        We continuously let the outermost robots of the black subchain perform diagonal hops in order to make it shorter (cf.\ Figure~\ref{fig:goodpair_ex_simple_norunners}) until it becomes short enough for enabling the operation of Figure~\ref{fig:ALG_merge2}.
        We say, these hops \emph{reshape} the subchain.
        This task introduces two major challenges:
        \begin{enumerate}
            \item When the outermost black robots of Figure~\ref{fig:goodpair_ex_simple_norunners} decide to start the reshapement (round $i$), within their restricted viewing range they do not know if their task of performing diagonal hops will lead to merges.
            \item In the following rounds, the subchain which an outermost black robot can see, in general locally looks the same as the one, visible by its white neighbor.
            So, we have to ensure that the reshapement is continued by the outermost black robots instead of by their white neighbors.
        \end{enumerate}
        In the following, we tackle these challenges.
        For this, we first introduce a certain state of a robot, which we call the \emph{run} state.
        We call a robot with active run state a \emph{runner}.
        Robots can achieve this state in two different ways:
        \begin{enumerate}
            \item (\emph{start runstate}) If the local subchain within a robot's viewing range has a certain shape, then the robot decides on its own to generate the run state.
                We say, such a robot starts a run.
                Based on the shape of the local subchain, the run state gets a fixed moving direction along the chain.
                A robot can start and store up to two run states at the same time.
                Figure~\ref{fig:runstartingrobots} shows how the local shapes must look like.\label{enum:runstart}
            \item (\emph{move runstate}) A runner $R(S)$ can move the run state $S$ to its chain neighbor $r'$ in moving direction of $S$.
            We say, the run state has moved from $R(S)$ to $r'$, while its in \ref{enum:runstart})\ initially set moving direction always remains unchanged. Afterwards, $r'$ is identified by $R(S)$.\label{enum:runmovement}
        \end{enumerate}
        Once a run state has been started in \ref{enum:runstart}), \ref{enum:runmovement})\ is executed in every of the following rounds.
        This means that the run moves along the chain at constant speed and in the initially settled moving direction.
        \begin{figure}[h]
            \centering
            \includegraphics[scale=\figscale]{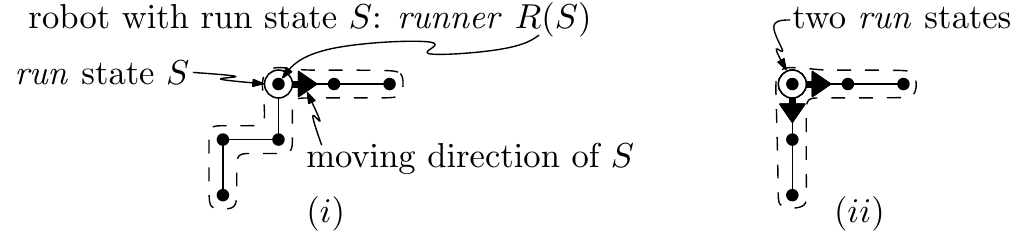}
            \caption{The encircled robots decide to start the run states, only based on the shape of the marked subchain within their viewing range.
            The large arrow heads indicate the moving direction of the runs.
            This is the notation, we will use for marking a runner. $R(S)$ identifies the runner/robot which currently has the run state $S$;
            In $(ii)$, the encircled robot is the endpoint of a horizontally and a vertically aligned subchain at the same time. Here, we must start two runs, moving in both directions along the chain.}
            \label{fig:runstartingrobots}
        \end{figure}
        \begin{figure}[h]
        \centering
            \includegraphics[scale=\figscale]{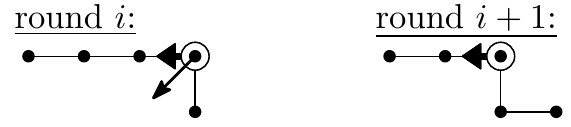}
            \caption{If in round $i$ the local subchain looks like this, then the runner performs a diagonal hop and the run state moves to the next robot in moving direction.}
            \label{fig:ALG_hop_basic}
        \end{figure}
        A runner can perform a diagonal reshapement hop (and afterwards move the run state to its chain neighbor in moving direction) if the chain locally looks like in Figure~\ref{fig:ALG_hop_basic} (round $i$).
        Figure~\ref{fig:runmeeting_simple}.$a)$ shows sequent operations.
        This solves the problem, we have had in Figure~\ref{fig:goodpair_ex_simple_norunners}, where it was impossible to locally decide by the robots, which of them has to perform the reshapement hop.
        \begin{figure}[h]
        \centering
            \includegraphics[scale=\figscale]{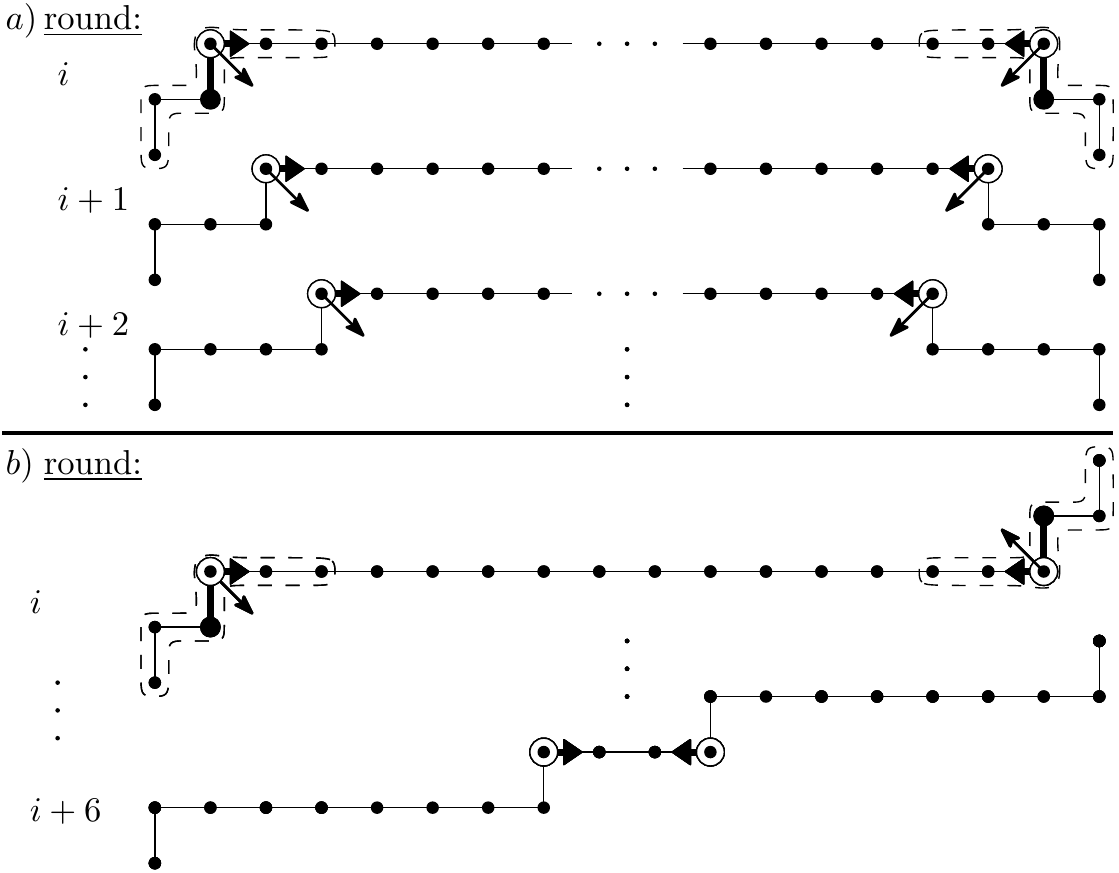}
            \caption{Run pairs, started at both ends of horizontally aligned subchains.
            $a,b)$: In round $i$, the new runs start (cf.\ Figure~\ref{fig:runstartingrobots}).
            Afterwards, the actions of Figure~\ref{fig:ALG_hop_basic} are repeatingly executed.
            The runs are moving closer and closer together.
            $a)$: The run pair is a \emph{good pair}. If the runs have moved close enough, then a merge can be performed.
            $b)$: No merge can be performed. Then the runs just pass along each other. (Details: Figure~\ref{fig:runpass})}
            \label{fig:runmeeting_simple}
        \end{figure}
        \begin{figure}[h]
        \centering
            \includegraphics[scale=\figscale]{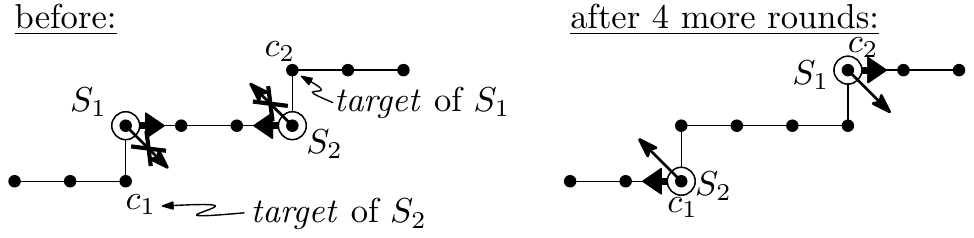}
            \caption{\emph{Run passing} operation: The runs $S_1,S_2$ do not enable a merge.
            If their distance along the chain is less or equal $\crossdistance$, then they pass each other by only keeping moving but without making the runners perform diagonal hops.
            Afterwards, i.e., when $S_1,S_2$ have reached their target robots/corners $c_2,c_1$, they return to normal operation.
            }
            \label{fig:runpass}
        \end{figure}
        But not all run pairs, started at the endpoints of a horizontally aligned subchain, do actually enable a merge.
        We distinguish \emph{good pairs} and \emph{non good pairs}.

        \underline{\emph{good pairs}:}
        (Cf.\ Figure~\ref{fig:runmeeting_simple}.$a)$) If the exterior neighbors (the fat robots in the figure) of the newly started run pair are both located on the same side of the subchain, then the run pair is called a \emph{good pair}.
        These pairs enable a merge if they have been moving close enough together and then terminate.

        \underline{\emph{non good pairs}:}
        Figure~\ref{fig:runmeeting_simple}.$b)$ shows the opposite case, i.e., the fat robots are located at different sides of the subchain.
        These pairs do not enable a merge.
        We let the runs of such pairs pass along each other.
        Figure~\ref{fig:runpass} shows how this is performed:
        At the time when their distance (i.e., the number of edges on the subchain connecting both) is $\crossdistance$ or less, they only keep moving along the chain, but the runners do not perform reshapement hops.
        This is repeated until $S_1$ is located at its target robot $c_2$ (then also $S_2$ is located at its target $c_1$).
        We call this the \emph{run passing} operation.
        Afterwards, the normal reshapement operations are continued.
        Note that depending on whether the distance between $S_1$ and $S_2$ was odd or even, it happens that during this passing process
        at one time both runs are located at the same robot.
        Then, this robot handles both runs separately according to their movement directions.
    \subsection{Parallelizing runs: \emph{Pipelining}}\label{ssec:basicpipelining}
        The total process of gathering needs the work of many good pairs.
        For sake of a short running time, we let some of them work in parallel.
        Every constant number of $L=\pipelininginterval$ rounds, all robots simultaneously check if they can start new runs (cf.\ Figure~\ref{fig:runstartingrobots}) and if so, they do so.
        We will show that this procedure ensures that even if multiple good pairs are nested into each other, different good pairs will enable different merges.
        This is what we call \emph{pipelining}.
        \begin{figure}[h]
        \centering
            \includegraphics[scale=\figscale]{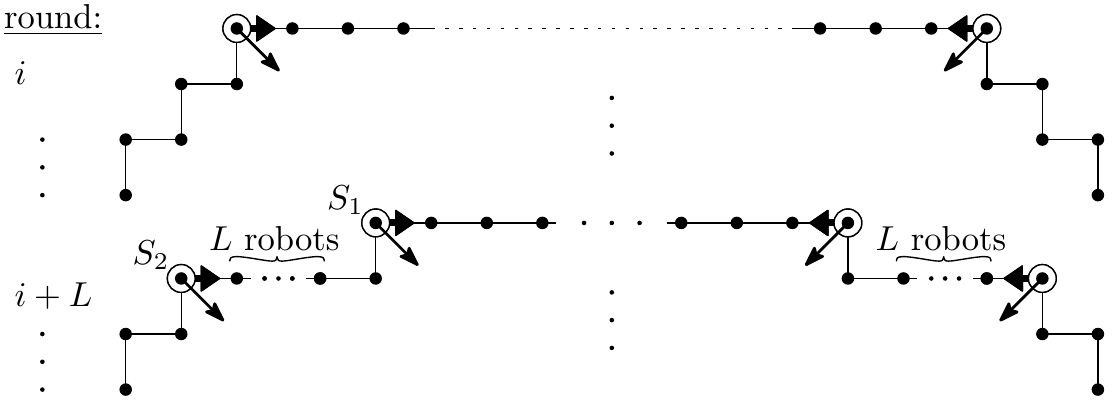}
            \caption{Pipelining of runs. New runs are started every $L=\pipelininginterval$ rounds.}
            \label{fig:pipelining_ex}
        \end{figure}
        Figure~\ref{fig:pipelining_ex} shows an example for this.
        Because $S_1, S_2$ are moving in the same direction, we call them \emph{sequent} runs, while relative to their moving direction
        $S_1$ is located \emph{in front of} $S_2$.
        The \emph{distance} between them equals the number of edges on the subchain connecting both.

        As runs are moving with constant speed, the runs of the inner good pair will meet and as the result enable a merge and stop, first.
        Then, obviously the outer good pair will also enable a (different) merge, some rounds later.
    \subsection{Stopping runs:}\label{ssec:basicstopping}
        In order to ensure that the pipelining works correctly, we let a runner $R(S)$ stop/terminate its run $S$, if one of the following conditions is true:
        \begin{enumerate}
            \item It can see the next sequent run in front of it (This happens if sequent runs have come too close to each other, e.g., because of merge operations.).\label{enum:basictooclosestop}
            \item It can see the end of the horizontally aligned subchain it is located on, in front of it.
            \item It was part of a merge operation.
            \item While it performs the run passing operation of Figure~\ref{fig:runpass}, the target corner is removed (This can happen because of a merge operation.).\label{enum:basicrunpassstop}
        \end{enumerate}

        We give some more detailed explanations concerning some of the above conditions:
        \begin{itemize}
            \item[\ref{enum:basictooclosestop})]
                For example, because of merges two sequent runs may come too close to each other, which then might hinder the pipelining.
                The affected runners can detect this on their own.
                The criterion for this is that the next sequent run in front of them becomes visible.
                Then, the termination condition \ref{enum:basictooclosestop}) matches and the run behind stops.
            \item[\ref{enum:basicrunpassstop})]
                We assume, that in Figure~\ref{fig:runpass} to the right of robot $c_2$ another run $S_3$, moving in the same direction as $S_2$, is located.
                During the rounds in which $S_1$ and $S_2$ are performing their run passing operation, $S_3$ keeps moving towards $c_2$.
                Now, it may happen that because of the reshapements of $S_3$, $c_2$ becomes part of a merge operation.
                Then, $c_2$ would hop downwards such that the corner shape does not exist anymore.
                Because this corner has been the target of $S_1$, $S_1$ could not continue its reshapements after the run passing, so it terminates.
        \end{itemize}
    \subsection{Correctness and running time}
        If every round a merge can be performed, then the time needed for the gathering is obviously upper bounded by $n$,
        with $n$ being the number of robots.
        If no merge can be performed, then the shape of the chain is reshaped by runs.
        We have shown that good pairs enable merges.
        So we need to show that if nowhere on the whole chain a merge is possible, always at least one new good pair can be started.

        A single good pair needs at most $n$ rounds until a merge can be performed.
        In Subsection~\ref{ssec:basicpipelining}, we have already noticed that if new good pairs are started every $L=\pipelininginterval$ rounds,
        then different good pairs lead to different merges.
        The last good pair is started at round $L\cdot n$ and finishes its work after at most $n$ further rounds.
        As $L$ is a constant, we then get to the total linear running time $\in\calO(n)$.
\section{Algorithm in Detail}\label{sec:algdetail}
    Now, we explain the complete strategy.
    In Section~\ref{sec:basicidea}, we let runs move only along strictly horizontally aligned subchains.
    But such subchains do not always exist.
    We need to generalize them to so called \emph{quasi lines} (see Definition~\ref{def:quasiline}).
    The main difference to the description of Section~\ref{sec:basicidea} is that we now let runs move along such \emph{quasi lines}.
    Figure~\ref{fig:quasiline_ex2} gives an example of a quasi line.
    \begin{figure}[h]
        \centering
        \includegraphics[scale=\figscale]{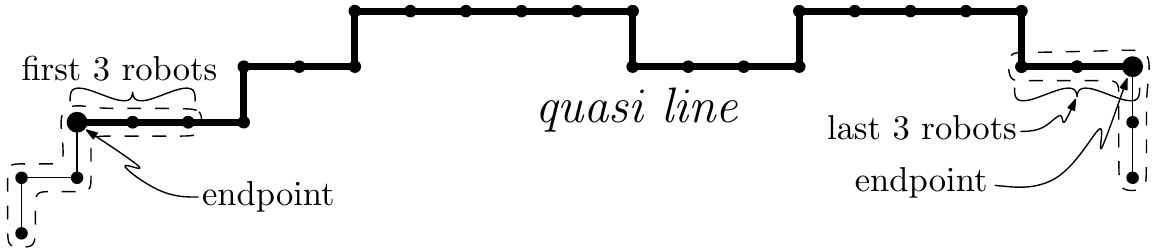}
        \caption{Example of a quasi line. The fat robots are its endpoints.}
        \label{fig:quasiline_ex2}
    \end{figure}
    \begin{definition}[quasi line]\label{def:quasiline}
        We call a subchain a horizontal \emph{quasi line}, if the following points hold:
        \begin{enumerate}
            \item At least its first and last three robots are horizontally aligned.
            \item All its subchains of horizontally aligned robots contain at least three robots.
            \item All its subchains of vertically aligned robots contain at most two robots.
        \end{enumerate}
        In a Mergeless Chain, at both ends a subchain of Figure~\ref{fig:runstartingrobots} in a matching rotation or reflection occurs. (If the chain is not mergeless, then the subchains outside the quasi line's endpoints may also have other shapes than these.)

        The definition of a vertical \emph{quasi line} follows analogously.
    \end{definition}
    In Figure~\ref{fig:quasiline_ex2}, the fat subchain, connecting and including the fat robots at its endpoints, is called a quasi line.

    Having introduced quasi lines, we now have to analyze, how this affects \emph{merges}, \emph{runs} and \emph{pipelining} we have had explained in Section~\ref{sec:basicidea}.

    Merges remain exactly the same as in the basic description (Subsection~\ref{ssec:merges}), i.e., they
    are only performed with black subchains, solely consisting of strictly horizontally aligned robots as shown in Figure~\ref{fig:ALG_merge2}.
    So, we can continue with \emph{Mergeless Chains} and the analog to Subsection~\ref{ssec:basicruns}.
    \subsection{Reshapement of the chain, done by runners}\label{ssec:runs}
        The main approach remains the same as in Subsection~\ref{ssec:basicruns}.
        I.e., if in Figure~\ref{fig:ALG_merge2} the black subchain is longer than the robots viewing path length, we use reshapement hops of runners for shortening this black subchain until a merge becomes possible.
        We start new runs at the same subchains as in the basic description (cf.~Figure~\ref{fig:runstartingrobots}).
        The only difference is that now these subchains are connected by a quasi line.
        Because of this, as one can see in the example of Figure~\ref{fig:goodpair_ex2_quasiline}, the runs now have to move several steps
        along the chain until arriving at the endpoints of the subchain, bordered in the figure, which needs to be shortened for performing the merge.
        For this movement, depending on the local shape of the subchain, we require additional run operations:
        Figure~\ref{fig:ALG_hop_simple2}.$a)$ is the basic operation, while $b)$ and $c)$ are the new ones.
        \begin{figure}[h]
        \centering
            \includegraphics[scale=\figscale]{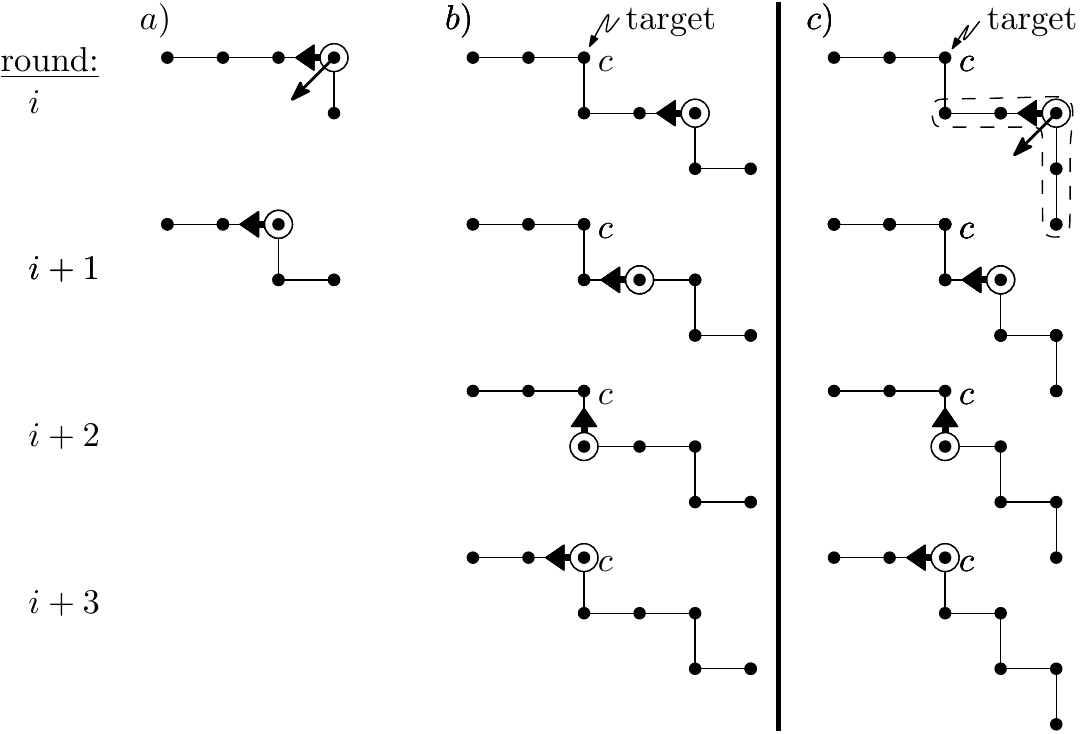}
            \caption{$a)$: Reshapement by a runner. The operation takes only one round. $b)$: No diagonal hops are performed until the target corner $c$ is reached. $c)$: Special case when new runs start: First, perform one diagonal hop, then no diagonal hops until the target corner $c$ is reached.}
            \label{fig:ALG_hop_simple2}
        \end{figure}
        \begin{itemize}
            \item[$a)$] The runner and at least the next 3 robots are located on a straight line. Here, the runner first performs a diagonal hop, then moves the run to the next robot.
            \item[$b)$] The runner and only the next 2 robots are located on a straight line. Then, for 3 times the runners just move the run to the next robot without any diagonal hops. Afterwards, it is located at the target corner $c$.
            \item[$c)$] This one is needed at most once for a new run, if started at the subchain of Figure~\ref{fig:runstartingrobots}.$(ii)$.
        \end{itemize}
        New runs always start pairwise at both endpoints of a quasi line.
        \emph{Good pairs} of runs are defined analogously to Subsection~\ref{ssec:basicruns} by the relative position, concerning the quasi line, of the outer chain neighbors of the endpoints of the quasi line (cf.\ Figure~\ref{fig:goodpair_ex1}).
        \begin{figure}[h]
        \centering
            \includegraphics[scale=\figscale]{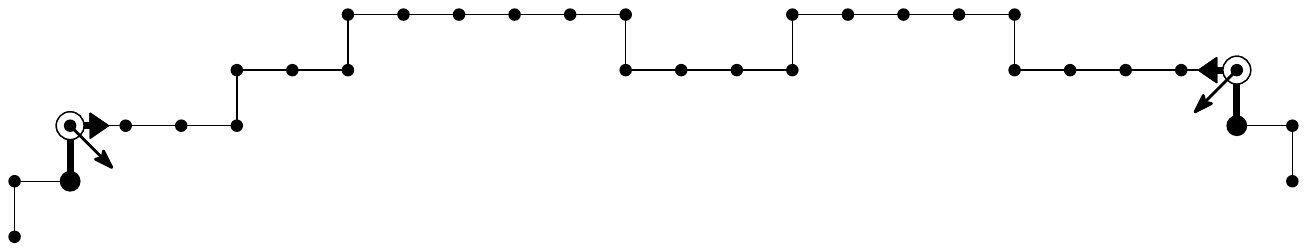}
            \caption{Good pairs of runs, connected by a quasi line. The runs are a good pair if the outer chain neighbors (the fat robots) are both located either downwards or both upwards, i.e., on the same side of the quasi line.}
            \label{fig:goodpair_ex1}
        \end{figure}
        We will show that also with quasi lines, good pairs always enable a merge.
        And as before in Subsection~\ref{ssec:basicruns}, not all run pairs are good pairs.
        \begin{figure}[h]
            \centering
            \includegraphics[scale=\figscale]{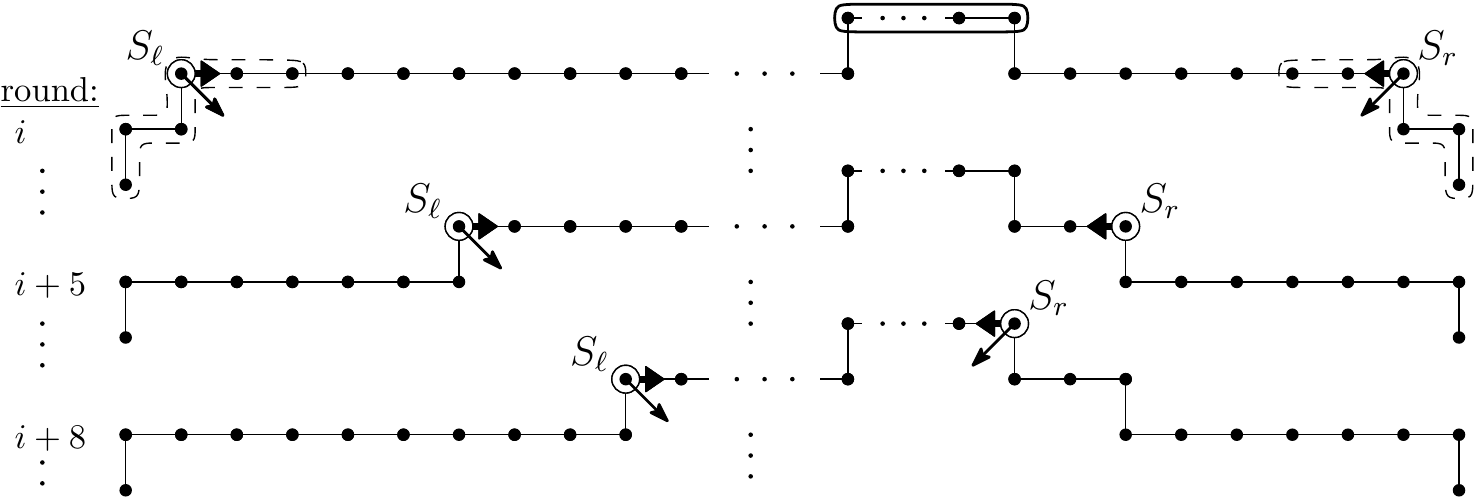}
            \caption{A good pair, started on a quasi line in order to shorten the bordered subchain. Several and different run operations are needed.}
            \label{fig:goodpair_ex2_quasiline}
        \end{figure}
        Figure~\ref{fig:goodpair_ex2_quasiline} shows an example for a good pair on a quasi line:
        The bordered subchain will be reshaped for performing a merge.
        Until round $i+4$, the runners execute only the basic operation of Figure~\ref{fig:ALG_hop_simple2}.$a)$.
        Afterwards, the runner $R(S_r)$ starts operation $b)$, while $R(S_\ell)$ still executes $a)$.
        After operation $b)$ has been processed completely, $S_r$ finally is located at the right end of the bordered subchain (round $i+8$).
        Now, its runners start shortening this subchain by executing $a)$ until the merge can be performed.
        Afterwards, $S_\ell$ is still active and keeps moving and will stop at the latest when an endpoint of the quasi line
        becomes visible.

        If two runs that do not enable a merge meet each other, we execute the run passing operation in a more generalized variant.
        As some of the operations of Figure~\ref{fig:ALG_hop_simple2} now take more than just a single round,
        it may happen that such an operation becomes interrupted by the run passing.
        Then, the target corners for the run passing are settled with respect to the situation when the interrupted operation
        has been started.
        Figure~\ref{fig:crossconst} shows an example:
        In round $i$ the runner $R(S_1)$ starts the execution of operation $b)$ of Figure~\ref{fig:ALG_hop_simple2}.
        In round $i+2$, this operation is still not finished, but the distance between $S_1$ and $S_2$ is $\crossdistance$ such that their runners both have to start the run passing operation.
        Then, the target corner of $S_2$ is the corner $c_1$.
        The target if $S_1$ as before is $c_2$.
        Similar to the basic run passing, both runs do not perform any reshapement hops before arriving at their target corners.
        Other cases, e.g., if the runners of both runs execute operation $b)$, are analog to this one.
        \begin{figure}[h]
            \centering
            \includegraphics[scale=\figscale]{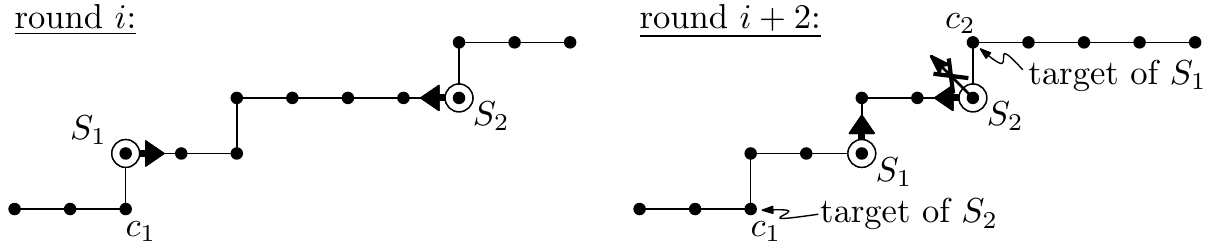}
            \caption{Runs, passing along each other while the runner $R(S_1)$ is executing the operation of Figure~\ref{fig:ALG_hop_simple2}.$b)$.}
            \label{fig:crossconst}
        \end{figure}
    \subsection{Parallelizing runs: \emph{Pipelining}}\label{ssec:pipelining}
        The pipelining works the same as in the basic explanation in Subsection~\ref{ssec:basicpipelining}.
    \subsection{Stopping runs:}
        The basic stop/termination conditions for runs (Subsection~\ref{ssec:basicstopping}) have to be extended for working on quasi lines.
        We let a runner stop/terminate its run, if one of the conditions of Table~\ref{table:runterminate} is true.
        \begin{table}
            \fbox{\parbox{0.97\columnwidth}{
                A runner stops/terminates its run, if at least one of the following conditions is true:
                \begin{enumerate}
                    \item It can see the next sequent run in front of it (This happens if sequent runs have come too close to each other, e.g., because of merge operations.).\label{enum:stoptooclose}
                    \item It can see the endpoint of the quasi line in front of it.\label{enum:stopendpoint}
                    \item It was part of a merge operation.\label{enum:stopmerge}
                    \item While it performs the run passing operation of Figure~\ref{fig:runpass}, the target corner is removed (This can happen because of a merge operation.).\label{enum:stoppass}
                    \item While it performs the operation of Figure~\ref{fig:ALG_hop_simple2}.$b)$ or $c)$, the target corner is removed (This can happen because of a merge operation.).\label{enum:stoplongop}
                \end{enumerate}
            }}
            \caption{Conditions which let a run terminate.}
            \label{table:runterminate}
        \end{table}
        Summarizing, all robots synchronously execute the algorithm, shown in Figure~\ref{fig:algo}.
        \begin{figure}[h]
            \fbox{\parbox{0.97\columnwidth}{
                Every robot $r$ every round checks the following three steps:
                \begin{enumerate}
                    \item \underline{Merge:} If $r$ detects a possible merge within its viewing range then
                        \begin{itemize}
                            \item if $r$ is one of the black robots in Figure~\ref{fig:ALG_merge2}, it hops downwards.
                            \item if afterwards $r$ is located at the same position as one of the white robots, the white one is removed without breaking the chain (If it is located at the same position as both white robots, then both white ones are removed.).
                        \end{itemize}
                    \item \underline{Run Operations:} If $r$ is a \emph{runner}, then
                        \begin{enumerate}
                            \item Its run terminates/stops if any of the conditions of Table~\ref{table:runterminate} is true.
                            \item Runner's Movement and Reshapement
                                \begin{itemize}
                                    \item Run passing:
                                        \begin{itemize}
                                            \item If $r$ is currently in progress of executing the run passing operation (Figure~\ref{fig:runpass} respectively \ref{fig:crossconst}), then this operation is continued.
                                            \item Else, if $r$ can see a run in front of it, such that both are moving towards each other and the distance between them is less or equal $\crossdistance$, then $r$ starts the run passing operation.
                                        \end{itemize}
                                    \item If $r$ is not in progress of passing, then
                                        \begin{itemize}
                                            \item If $r$ is in progress of executing a run operation of Figure~\ref{fig:ALG_hop_simple2}.$b,c)$, which takes more than one round, then this one is continued.
                                            \item Else: $r$ executes the matching new run operation of Figure~\ref{fig:ALG_hop_simple2}.
                                        \end{itemize}
                                \end{itemize}
                        \end{enumerate}
                    \item \underline{Start new runs:} Every $(L=\pipelininginterval)$th round, $r$ checks if it can start a new run:\\
                        If $r$ is one of the encircled robots of Figure~\ref{fig:runstartingrobots}.$(i)$ respectively $(ii)$, then it starts one resp.\ two runs.
                \end{enumerate}
            }}
            \caption{The algorithm.}
            \label{fig:algo}
       \end{figure}
    \section{Correctness and running time}
        In this section, our goal is the proof of the correctness and the linear unning time (Theorem~\ref{thm:runningtime}).
        In our main approach, we want to enable merges by good pairs if else no merge could be performed.
        The following two lemmas prove that this actually works.
        They are the base for the proof of the theorem.
        The proofs of the lemmas can be found in Subsection~\ref{ssec:progresspair_exists} and \ref{ssec:progresspair_merge}.

        Because of the local vision, new runs and maybe also good pairs are started every $L=\pipelininginterval$ rounds, regardless of whether or not at some other location on the chain merges can be performed.
        In the following  analysis, we will only argue with new good pairs which are startet if during the last $L-1$ and the current round on the whole chain no merge has been performed.
        We distinguish such good pairs from others by calling them \emph{progress pairs}.

        \begin{lemma}\label{lem:progresspair_exists}
            Every $L=\pipelininginterval$ rounds either a merge has been performed or else a new \emph{progress pair} is started.
        \end{lemma}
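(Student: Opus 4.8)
The plan is to argue by contradiction: suppose that during the last $L-1$ rounds and the current round no merge has been performed anywhere on the chain, and that at the current round no new progress pair is started either. Since no merge is possible, the chain is a \emph{Mergeless Chain} in the sense of Subsection~\ref{ssec:basicruns}, and by Definition~\ref{def:quasiline} it decomposes into maximal quasi lines whose endpoints carry the local shapes of Figure~\ref{fig:runstartingrobots}. The first step is to establish that a Mergeless Chain is \emph{not} everywhere straight, i.e.\ that at least one genuine corner (a maximal vertically aligned subchain of one or two robots inside an otherwise horizontal stretch, or vice versa) must exist: if the whole closed chain were a single straight line it could not close up on the grid, so there is at least one direction change, and by the mergelessness constraints (all horizontal runs $\ge 3$, all vertical runs $\le 2$, and no short horizontal subchain that would itself allow a merge) these corners arrange into well-defined quasi lines with the run-starting shapes at their endpoints.

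The second step is to show that every quasi-line endpoint is, in the current round, actually eligible to start a run. By Definition~\ref{def:quasiline}, in a Mergeless Chain the subchain just outside a quasi-line endpoint has exactly one of the shapes of Figure~\ref{fig:runstartingrobots} (up to rotation/reflection), which is precisely the trigger condition in step~3 of the algorithm (Figure~\ref{fig:algo}). So on a round where $r$ checks for new runs — and by hypothesis the current round is such a round, since the counting of the $L$-round cycle is global and synchronous — each quasi-line endpoint does start its run(s). It remains to see that at least one of the resulting run pairs is a \emph{good pair}: the key observation is that a quasi line has two endpoints, each with an outer chain neighbor lying on one of the two sides of the quasi line; if for some quasi line both outer neighbors lie on the same side, that pair is good by definition and we are done. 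The last step handles the case where, for every quasi line, the two outer neighbors lie on opposite sides: then I would trace around the closed chain and use a parity/orientation argument — walking along the cyclic sequence of quasi lines, each ``opposite-sides'' quasi line flips the side on which we are travelling, and since the chain is closed the total number of flips around the cycle must be even, which (because the number of merge-enabling corner-pairs has the right parity in a Mergeless Chain) forces at least one quasi line to have its two outer neighbors on the same side, contradicting the assumption and yielding a good pair. Finally, any good pair started on a round where no merge has occurred during the preceding $L-1$ rounds is by definition a progress pair.

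The main obstacle I anticipate is the last step: making the orientation/parity argument on the cyclic arrangement of quasi lines fully rigorous. One has to be careful about quasi lines that are themselves ``bent'' (they may turn several times internally, each internal corner being a vertical subchain of one or two robots), about the interaction between adjacent quasi lines sharing corner regions, and about degenerate small configurations; in particular one must check that the parity bookkeeping is not spoiled by the even-initial-$n$ symmetric configurations mentioned in the introduction. A secondary, more routine obstacle is verifying that starting a run never itself breaks the chain or creates an unhandled overlap with an already-running run — but this is local and follows from the case analysis behind Figures~\ref{fig:runstartingrobots} and~\ref{fig:ALG_hop_simple2}, so I would relegate it to a short separate claim rather than carry it inside the main argument.
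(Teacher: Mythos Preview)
Your overall structure matches the paper's proof: assume the chain is a Mergeless Chain, observe that it decomposes into quasi lines joined by \emph{stairways} (subchains of alternating left/right turns), note that the run-starting shapes of Figure~\ref{fig:runstartingrobots} appear precisely at the quasi-line endpoints so that runs are indeed started there, and then argue by contradiction that at least one quasi line must have both outer chain neighbors on the same side, yielding a good pair and hence a progress pair. The divergence is exactly at the step you yourself flag as the main obstacle.

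Your parity argument, as stated, does not close. Suppose every quasi line has its two outer neighbors on opposite sides; then in your language every quasi line contributes one ``flip'', and nothing prevents the total number of quasi lines from being even. Evenness of the flip count is perfectly consistent with the all-opposite-sides hypothesis and produces no contradiction by itself. The parenthetical patch (``because the number of merge-enabling corner-pairs has the right parity in a Mergeless Chain'') is not a usable statement: in a Mergeless Chain there are, by definition, no merge-enabling subchains, so that count is zero and carries no information. More fundamentally, the obstruction to closing the chain under the all-opposite-sides assumption is not a combinatorial parity but a metric drift.

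The paper supplies this missing step with a half-plane (monotone drift) argument rather than a parity count. It groups consecutive horizontal quasi lines together and consecutive vertical quasi lines together, and at each transition from a vertical group to a horizontal group records an \emph{orientation vector} (from the penultimate robot of the last vertical quasi line to the second robot of the first subsequent horizontal one), which in turn determines a half-plane $H$. The all-opposite-sides assumption on the horizontal quasi lines forces the next group of vertical quasi lines to point opposite to the previous one; applying the same reasoning to the vertical quasi lines makes the successive orientation vectors parallel and the associated half-planes nested, $H\subset H'\subset H''\subset\cdots$. This nesting means the chain can never re-enter the half-plane containing the starting robot $s$, so it cannot close --- the desired contradiction. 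Replacing your parity sketch by this drift argument (or an equivalent monotonicity/winding argument made precise) is the one substantive change your plan needs; the rest of your outline is aligned with the paper.
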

        \begin{lemma}\label{lem:progresspair_merge}
            For \emph{progress pairs} the following properties hold.
            \begin{enumerate}[a)]
                \item Every progress pair enables a merge (after at most $n$ rounds).\label{enum:progpairmerge}
                \item Different progress pairs enable different merges.\label{enum:progpairuniqueness}
            \end{enumerate}
        \end{lemma}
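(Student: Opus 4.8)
The plan is to handle both parts by following one progress pair through its entire lifetime. For part~\ref{enum:progpairmerge} I would show it is never stopped prematurely and hence reaches a configuration that matches Figure~\ref{fig:ALG_merge2}; for part~\ref{enum:progpairuniqueness} I would use that each merge physically deletes robots, so two merges enabled by two different progress pairs cannot be the same event. The two recurring tools are: runs move along quasi lines at constant speed one, and --- because $P$ is a \emph{progress} pair --- no merge occurred during the $L=\pipelininginterval$ rounds before $P$ started, so every other run $P$ can encounter also lives on the same $L$-round time grid.

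For part~\ref{enum:progpairmerge}, fix a progress pair $P=(S_\ell,S_r)$ started at round $i_0$ at the endpoints $c_\ell,c_r$ of a quasi line $Q$, with the outer chain neighbors of $c_\ell$ and $c_r$ on the same side of $Q$ (the defining property of a good pair). The heart of the argument is a propagation invariant, proved round by round by case analysis over the three run operations of Figure~\ref{fig:ALG_hop_simple2}: in every round the part of the chain strictly between $R(S_\ell)$ and $R(S_r)$ is again a (possibly trivial) horizontal quasi line in the sense of Definition~\ref{def:quasiline}, the two runs keep moving toward each other at speed one, and the horizontal extent between them never grows and strictly shrinks whenever the basic operation~$a)$ fires on both sides. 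Since $Q$ has at most $n$ robots and the runs start from opposite endpoints, after at most $n$ rounds their distance is at most the viewing path length $\viewingradius$, and the good-pair property then forces the chain between them into the black/white subchain of Figure~\ref{fig:ALG_merge2} (invoking the overlap rules of Figure~\ref{fig:symmerges} when the two runs share a robot), so a merge is enabled.

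The main obstacle is ruling out premature termination of $S_\ell$ or $S_r$ via the conditions of Table~\ref{table:runterminate}. Condition~\ref{enum:stopendpoint} cannot fire before the runs meet, since each run reaches its partner before it reaches the far endpoint of $Q$. For the remaining conditions I would run a counting estimate using the constants $L=\pipelininginterval$, viewing path length~$\viewingradius$ and crossing distance~$\crossdistance$: any sequent run ahead of $S_\ell$ or $S_r$ stays more than $\viewingradius$ robots in front of it until $P$ has enabled its merge, and any non-good partner that $P$ must cross is handled by the run-passing operation of Figures~\ref{fig:runpass}/\ref{fig:crossconst}, after which the run resumes with its target corner intact. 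The one genuinely new interaction is a progress pair $P'$ nested inside $P$ that enables its merge first: that merge shortens $Q$ but still leaves a horizontal quasi line between $R(S_\ell)$ and $R(S_r)$, so the invariant survives and $P$ merges a few rounds later. Verifying that $L=\pipelininginterval$ is large enough for all of these estimates to hold simultaneously is exactly where the concrete value of the pipelining interval gets used.

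For part~\ref{enum:progpairuniqueness}, take two distinct progress pairs $P_1,P_2$. If the chain segments spanned by their runs are disjoint, their merges are at disjoint locations and we are done. Otherwise the pipelining rule forces them to have been started at least $L$ rounds apart, so up to relabeling $P_2$ is nested inside $P_1$ or the two are sequent; in either case the inner/front pair, by the invariant above, reaches and enables its merge strictly earlier. Because that merge removes robots, when $P_2$ later enables its own merge --- which it must by part~\ref{enum:progpairmerge} --- the local picture there no longer contains the robots consumed by $P_1$, so it is a different merge event. The only point needing care is, once more, to check via the same counting estimate that the earlier pair always finishes before the later pair's leading run can collide with it outside the scope of the termination conditions of Table~\ref{table:runterminate}.
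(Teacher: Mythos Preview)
Your overall shape is close to the paper's: both arguments rest on the run invariants (your ``propagation invariant'' is essentially the paper's Lemma~\ref{lem:runinvariants}) and on walking through the termination conditions of Table~\ref{table:runterminate}. The substantive divergence is in \emph{how} you dispose of those termination conditions, and that is where your proposal has a gap.

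The paper does \emph{not} try to show that the runs $S,S'$ of a progress pair survive until they themselves produce a merge. Instead it allows them to be stopped and uses a \emph{crediting} argument: if $S$ is stopped by condition~\ref{enum:stopmerge}, \ref{enum:stoppass} or~\ref{enum:stoplongop}, then some merge has just occurred nearby, and the paper argues that this merge can be attributed to the pair $(S,S')$ and to no other progress pair. Condition~\ref{enum:stopendpoint} is then reduced to this case (the only way the far endpoint becomes visible is that $S'$ was already stopped, hence by a merge), and condition~\ref{enum:stoptooclose} is ruled out via the $L-1$ initial spacing. Part~\ref{enum:progpairuniqueness} falls out of the same crediting analysis rather than from a separate timing argument.

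Your plan instead asserts that $S,S'$ are \emph{never} stopped prematurely: after an inner pair $P'$ merges, ``the invariant survives and $P$ merges a few rounds later''. But that merge can trigger exactly the conditions~\ref{enum:stopmerge}, \ref{enum:stoppass}, \ref{enum:stoplongop} for $S$ or $S'$ --- the run may be part of the merge, or its current target corner may be the one removed --- and then $P$ does not continue at all. Your counting estimate with $L=\pipelininginterval$ does not exclude this: the relevant constants only control the spacing to \emph{sequent} runs, not the distance to the merge location of an \emph{oncoming} inner pair. Likewise, for part~\ref{enum:progpairuniqueness} your claim that the inner pair merges \emph{strictly earlier} is not automatic (an inner pair started $L$ rounds later also has roughly $2L$ fewer robots to traverse, so naive arithmetic gives the same meeting time), and the labeling in your last paragraph is inconsistent. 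Replacing your survival argument by the paper's crediting argument fixes both parts at once.
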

        Using these two lemmas, we can now prove the total linear running time.
        \begin{theorem}\label{thm:runningtime}
            Given a closed chain of $n$ robots. Then, after $\calO(n)$ rounds gathering is done. This is asymptotically optimal.
        \end{theorem}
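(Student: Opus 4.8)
The plan is to obtain the upper bound $\calO(n)$ from Lemmas~\ref{lem:progresspair_exists} and~\ref{lem:progresspair_merge} by a charging argument over time blocks of length $L=\pipelininginterval$, and to obtain optimality by exhibiting an initial closed chain of diameter $\Omega(n)$, which no algorithm in this local, unit-hop model can gather in $\LDAUomicron{n}$ rounds.

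For the upper bound I would first fix the progress measure. Every merge operation removes at least one robot from the chain, and gathering is complete once the robots fit in a $2\times 2$ square, i.e.\ once at most four robots remain; hence the total number $M$ of merge operations performed during the whole execution satisfies $M\le n-1<n$. Next, partition time into consecutive blocks of $L=\pipelininginterval$ rounds. As long as gathering is not yet complete, Lemma~\ref{lem:progresspair_exists} guarantees that in every block either at least one merge is performed during that block, or a new progress pair is started during that block. I then define a charging map from ``incomplete'' blocks to merge operations: a block in which a merge occurs is charged to one of the merges occurring in it; a block in which only a new progress pair is started is charged to the merge that this progress pair is guaranteed to enable by property~\ref{enum:progpairmerge} of Lemma~\ref{lem:progresspair_merge} (that property is used here only to certify that such a merge genuinely occurs, within at most $n$ further rounds, so that the charge target exists; it does not otherwise enter the round count).

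The key step is that this charging is at most $2$-to-$1$. A given merge operation happens in exactly one block, so it receives at most one charge ``as a merge block''. By property~\ref{enum:progpairuniqueness} of Lemma~\ref{lem:progresspair_merge}, distinct progress pairs enable distinct merges, hence each merge is the target of at most one progress pair and receives at most one charge ``as a progress‑pair block''. Therefore each of the fewer than $n$ merges is charged at most twice, so strictly fewer than $2n$ blocks can elapse before gathering is complete; since each block lasts $L=\pipelininginterval$ rounds, gathering terminates within $2nL\in\calO(n)$ rounds. For optimality I would take, for suitable $n$, the boundary cycle of a thin $2\times(n/2-1)$ grid rectangle (with an analogous construction for odd $n$): it is a valid closed chain with no two chain neighbors on the same grid point, and two of its robots are at Euclidean distance $\Omega(n)$. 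Since in one round a robot only hops to a neighboring grid point, any pairwise distance shrinks by at most a constant per round; as the final configuration fits in a $2\times 2$ square, these two robots must approach each other over a distance $\Omega(n)$, forcing $\Omega(n)$ rounds for any algorithm in this model, which matches the upper bound.

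The main obstacle here is not in the theorem itself but is already isolated in the two lemmas; within the theorem the only delicate part is the bookkeeping of the charging argument — in particular making sure that simultaneous merges on disjoint parts of the chain, and progress pairs whose enabled merge happens many rounds later, are each counted exactly where the $\le 2$-to-$1$ bound requires it, and that ``gathering is done'' (reaching a $2\times 2$ square, possibly an unbreakable symmetric configuration) is the correct stopping condition for which Lemma~\ref{lem:progresspair_exists} continues to apply up to completion.
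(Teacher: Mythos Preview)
Your proposal is correct and follows essentially the same route as the paper: partition time into length-$L$ blocks, use Lemma~\ref{lem:progresspair_exists} to classify each block as a merge block or a progress-pair block, and use Lemma~\ref{lem:progresspair_merge} to bound each class by the total number of merges (fewer than $n$), giving $\calO(n)$ rounds; the lower bound via an $\Omega(n)$-diameter closed chain is also the paper's argument. Your explicit $2$-to-$1$ charging is a slightly cleaner bookkeeping than the paper's (which adds a redundant trailing $+\,n$ rounds for the last progress pair to finish), but the idea is identical.
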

        \begin{proof}
            We subdivide time into intervals of lengths $L$, where $L$ denotes a number of rounds.
            Merges can be performed during at most $n$ such intervals, because every merge removes at least one robot.
            In all other intervals a new progress pair starts (Lemma~\ref{lem:progresspair_exists}).
            Each of these progress pairs leads to merge (Lemma~\ref{lem:progresspair_merge}.$\ref{enum:progpairmerge})$).
            Because no two of them lead to the same merge (Lemma~\ref{lem:progresspair_merge}.$\ref{enum:progpairuniqueness})$),
            the number of intervals without merges is also upper bounded by $n$.

            By Lemma~\ref{lem:progresspair_merge}, a progress pair needs at most $n$ rounds until it has led to a merge.
            We assume the worst case that in the last of the $2\cdot n$ intervals the last progress pair was started.
            Then the total running time is upper bounded by $2n\cdot L\; +n$, which proves the upper bound of the theorem, because $L$ is a constant.

            In our model, the diameter of the initial configuration provides the worst case lower bound $\Omega(n)$ for any gathering strategy.
        \end{proof} 
        %
    \subsection{Proof of Lemma~\ref{lem:progresspair_exists}}\label{ssec:progresspair_exists}
        \begin{proof}
            The lemma assumes that the chain is a \emph{Mergeless Chain}.
            Although our viewing path length also allows larger values, for the proof we assume that merges are only possible up to the length $2$ (cf.\ Figure~\ref{fig:ALG_merge2}).
            This suffices, because if a chain is a Mergeless Chain for a bigger length, it also is a Mergeless Chain for shorter lengths.

            As by definition all horizontal subchains of a horizontal quasi line consist of at least $3$ robots, no merge can be performed on them.
            In order to connect two horizontal or two vertical quasi lines without enabling a merge, they must be connected by so called \emph{stairways}.
            Stairways can have arbitrary length and are subchains of alternating left and right turns.
            Figure~\ref{fig:stairway3} shows an example.
            \begin{figure}[h]
                \centering
                \includegraphics[scale=\figscale]{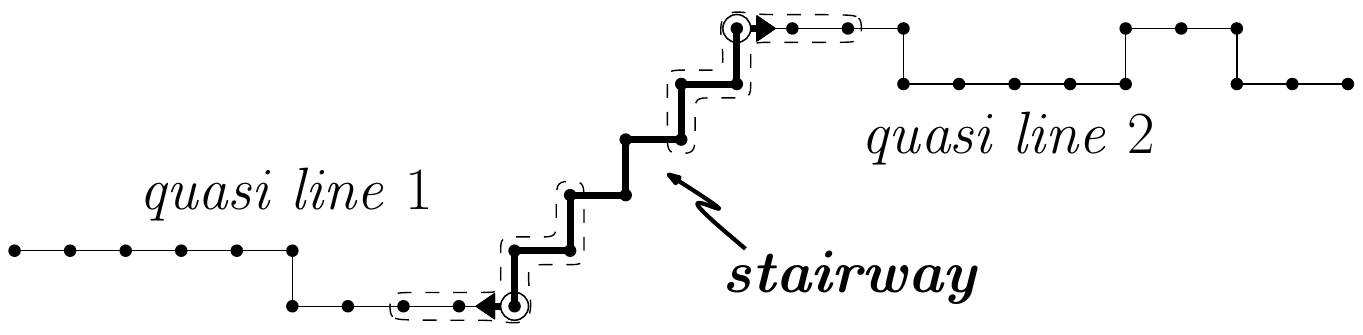}
                \caption{Two quasi lines, connected by a \emph{stairway}.}
                \label{fig:stairway3}
            \end{figure}
            All differently shaped connecting subchains would allow merges.
            If a horizontal and a vertical quasi line are connected, then this can also be done without any stairway.

            The above construction exactly leads to the run starting subchains $(i,ii)$ of Figure~\ref{fig:runstartingrobots}.
            In Figure~\ref{fig:stairway3}, such subchains are bordered by dashed curves.
            So new runs are always started at the endpoints of quasi lines.

            For being able to close the chain, there must exist both, horizontal and vertical quasi lines.
            We start on the left of Figure~\ref{fig:goodpair_existence} at a robot $s$ where a vertical and a horizontal quasi line
            are neighbors.
            For simplicity, we assume that the stairways, connecting the quasi lines in the figure, are of minimum length and symbolize
            quasi lines by dashed line segments.
            The fat robots correspond to the fat robots of Figure~\ref{fig:goodpair_ex1}, i.e., are the outer chain neighbors of quasi lines.
            As in Figure~\ref{fig:goodpair_ex1}, the run pairs are good pairs if the fat robots are both located on the same side.
            So, if no good pair existed, they must instead lie on alternating sides of the quasi lines.
            \begin{figure}[h]
                \centering
                \includegraphics[scale=\figscale]{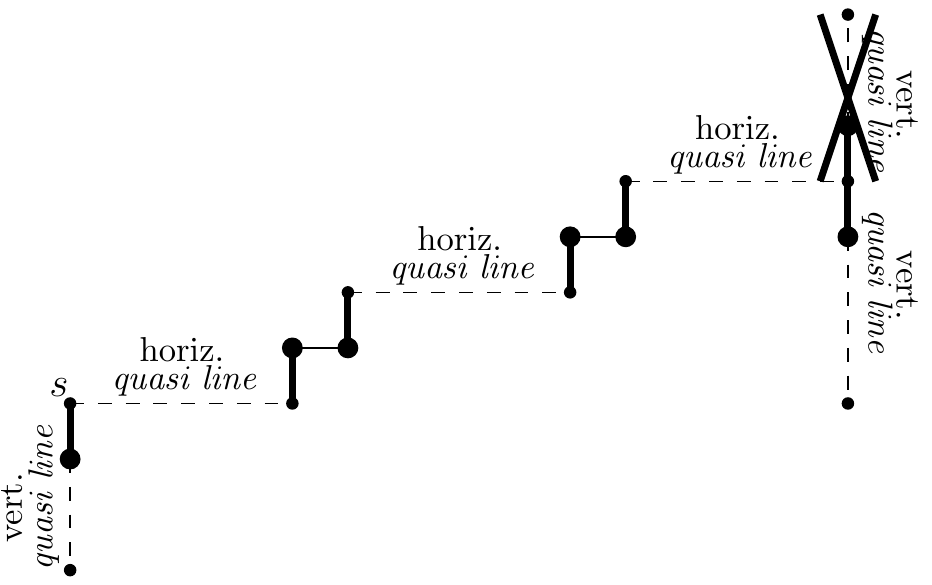}
                \caption{Proof idea: It is impossible that always both vertical quasi lines point to opposite directions. Then, in the example at the last horizontal quasi line, a good pair starts.}
                \label{fig:goodpair_existence}
            \end{figure}
            Figure~\ref{fig:goodpair_existence} shows how a sequence of horizontal quasi lines then must look like.
            Because we want to close the chain, we need a second vertical quasi line.
            If no merge is possible, this quasi line must point upwards, i.e., in opposite direction than the first one.
            In this proof, we show by contradiction that if this always is the case, then the chain cannot be closed.
            \begin{figure}[h]
                \centering
                \includegraphics[scale=\figscale]{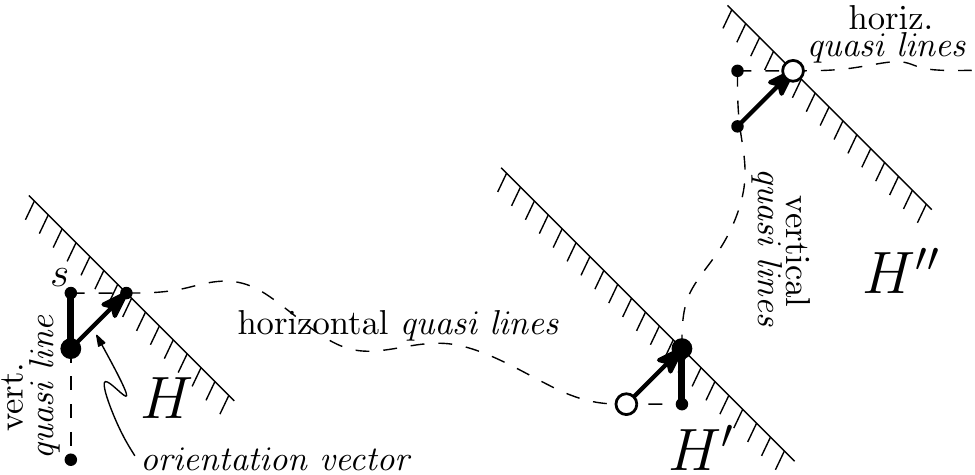}
                \caption{In a Mergeless Chain, good pairs do always exist.}
                \label{fig:goodpair_existence2}
            \end{figure}
            
            In Figure~\ref{fig:goodpair_existence2}, we have grouped subchains of connected horizontal respectively vertical quasi lines and symbolize such a group by a dashed curve.
            For our argumentation, we take the robot $s$ which is the first robot of the first horizontal quasi line, as the starting point.
            We define an \emph{orientation vector} which points from the second last robot
            of the last vertical quasi line to the second robot of the subsequent horizontal one.
            (If between these two quasi lines a stairway exists, then the orientation vector becomes longer than in the figure, but still points to the same direction.)
            
            The orientation vector defines the half-plane $H$ by orthogonally pointing from the interior to its boundary.
            In order to close the chain, we must return to $s$ and for this, reenter $H$.
            This cannot be done, using only horizontal quasi lines.
            So, we look at the point where the first vertical quasi line of the next group of vertical quasi lines occurs.
            Here, the orientation vector and the half-plane $H'$ are defined analogue to the previous ones.
            Because we assumed that the fat black robots lie on different sides of the horizontal quasi lines, the vertical quasi lines
            must point upwards.
            Then, the orientation vectors are parallel and $H\subset H'$.
            If we continue the argumentation for the vertical quasi lines, we have to ensure that now the fat white robots are located at different sides.
            Then the following horizontal quasi lines must point to the right again.
            Then again the orientation vectors are in parallel and we get $H\subset H'\subset H''$.
            Continuing this construction, we never get back to the robot $s$ and so can never close the chain.
            So, a good pair must exist. Because we have assumed that the chain has been a Mergeless Chain also during the previous $L-1$ rounds, this good pair is a progress pair.
        \end{proof}
    \subsection{Proof of Lemma~\ref{lem:progresspair_merge}}\label{ssec:progresspair_merge}

        For the proof, we need the run invariants of Lemma~\ref{lem:runinvariants}.
        \begin{lemma}\label{lem:runinvariants}
            The value $L=\pipelininginterval$ and the value $\viewingradius$ for the viewing path length ensure, that for every run $S$ until it terminates, the following invariants holds.
            \begin{enumerate}
                \item Every round, $S$ moves one robot further in moving direction.\label{enum:runinvmovement}
                \item After the first three rounds after its start $S$ is always located on a quasi line. (I.e., the reshapements of its runner do not violate the quasi line definition~\ref{def:quasiline}.)\label{enum:runinvquasiline}
                \item $S$ cannot see other sequent runs in front of it.\label{enum:runinvvisseqrun}
                \item $S$ is either in progress of passing along another run or the runner $R(S)$ executes one of the operations of Figure~\ref{fig:ALG_hop_simple2}.\label{enum:runinvops}
                \item Good pairs stay being good pairs.\label{enum:runinvgoodpairs}
            \end{enumerate}
        \end{lemma}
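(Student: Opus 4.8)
The plan is to prove all five invariants simultaneously by induction on the rounds, since they depend on one another. I would begin by fixing a single run $S$ started in some round $i$ and tracking it until it terminates, treating the five invariants as a joint inductive hypothesis that holds at the beginning of each round $j \ge i$. The base case covers the first few rounds after the start: here I would check by direct inspection of Figure~\ref{fig:runstartingrobots} and the run-starting rules that after the start (possibly after the one-time operation~$c)$ of Figure~\ref{fig:ALG_hop_simple2}) the run sits on a quasi line, establishing invariant~\ref{enum:runinvquasiline}; invariants \ref{enum:runinvmovement} and \ref{enum:runinvops} are immediate from the definition of the run operations (every operation in Figure~\ref{fig:ALG_hop_simple2}, as well as the run-passing operation, advances the run by exactly one robot per round and is one of the listed operation types).

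For the inductive step I would argue invariant by invariant, in an order respecting their dependencies. First, invariant~\ref{enum:runinvmovement} (constant speed) follows because each admissible run operation — the basic reshapement hop of Figure~\ref{fig:ALG_hop_simple2}.$a)$, the multi-round "no-hop" traversals $b)$ and $c)$, and the run-passing operation — is defined to move the run one robot along the chain per round, and by invariant~\ref{enum:runinvquasiline} from the previous round the local shape is always a quasi line, so one of these operations is always applicable; here the viewing path length $\viewingradius$ must be large enough that the runner can always see far enough ahead to determine which operation applies (in particular the next three robots, or the target corner). Next, invariant~\ref{enum:runinvquasiline}: I would check that each diagonal reshapement hop preserves all three defining properties of Definition~\ref{def:quasiline} — a diagonal hop shortens a horizontal subchain but, because merges have already removed the short cases and the runner only hops when it sees the configuration of Figure~\ref{fig:ALG_hop_simple2}.$a)$, the resulting horizontal pieces still have at least three robots, and no vertical piece of length $\ge 3$ is ever created; operations $b)$ and $c)$ move the run across a corner without reshaping, so they trivially preserve the quasi-line shape. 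This is where the specific constant $\viewingradius=\viewingradius$ (viewing path length) and $L=\pipelininginterval$ enter: the runner must see enough of the chain to be certain its hop does not locally violate the quasi-line conditions, and the pipelining interval $L$ must be large enough that a run started $L$ rounds later has not yet reshaped the portion of chain this run is about to traverse.

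Then, invariant~\ref{enum:runinvvisseqrun} (no sequent run visible in front): by the induction hypothesis the run behind is at constant speed and the run in front is at constant speed (both by invariant~\ref{enum:runinvmovement}), so their distance changes only through merges; I would argue that the pipelining interval $L=\pipelininginterval$ is chosen precisely so that the initial gap between two sequent runs exceeds $\viewingradius$, and the termination condition~\ref{enum:stoptooclose} of Table~\ref{table:runterminate} fires — terminating the rear run — before a merge can ever shrink the gap to within the viewing range, so the invariant is never actually violated while the run is alive. Invariant~\ref{enum:runinvops} is then essentially a restatement of what we used for invariant~\ref{enum:runinvmovement}: at each round the runner is either mid-way through a multi-round operation $b)$/$c)$ or through a run-passing, in which case it continues, or else it is free to start a fresh operation, and by invariant~\ref{enum:runinvquasiline} (quasi line) one of the operations of Figure~\ref{fig:ALG_hop_simple2} always matches the local shape. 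Finally, invariant~\ref{enum:runinvgoodpairs} (good pairs stay good): the "goodness" of a pair is determined by which side of the quasi line the two outer neighbours (the fat robots) lie on; I would show that the synchronized reshapement hops of the two runners of a good pair move the two endpoints of the quasi line symmetrically inward while leaving those outer neighbours fixed (they do not move, being outside the runners' active region), so the side relation is preserved — and similarly run-passing, which is only done by non-good pairs, is irrelevant here.

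The main obstacle I expect is invariant~\ref{enum:runinvquasiline} together with the interaction captured by invariant~\ref{enum:runinvvisseqrun}: one has to rule out that a run, or a second sequent run started $L$ rounds later, reshapes the chain in a way that destroys the quasi line under the feet of a run that is traversing it — this is precisely the delicate case analysis that forces the concrete choices $L=\pipelininginterval$ and viewing path length $\viewingradius$, and it requires carefully bounding how far a run can have reshaped the chain in the $L$ rounds since the previous run passed the same spot, as well as handling the boundary effects near the quasi-line endpoints where operations $b)$ and $c)$ occur. I would organize this as a worst-case spacing computation: track the position of the front run and the rear run, show their separation stays above $\viewingradius$ until the rear one terminates via condition~\ref{enum:stoptooclose} or \ref{enum:stopmerge}, and conclude the reshapement regions are always disjoint enough for the quasi-line property to survive.
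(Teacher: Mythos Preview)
Your inductive organization is reasonable, and several of your invariant-by-invariant arguments match the paper's: invariants~\ref{enum:runinvmovement} and~\ref{enum:runinvquasiline} are handled essentially as you suggest (each operation of Figure~\ref{fig:ALG_hop_simple2} either does not reshape or only shortens horizontal pieces of length $>2$), and invariant~\ref{enum:runinvvisseqrun} is indeed an immediate consequence of termination condition~\ref{enum:stoptooclose}. However, you misidentify the mechanism that forces the specific constants $L=\pipelininginterval$ and viewing path length~$\viewingradius$. You attribute them to ensuring that the reshapement regions of sequent runs stay disjoint, i.e., to preventing a later run from ``destroying the quasi line under the feet'' of an earlier one. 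That is not the binding constraint.

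In the paper the constants are derived inside the proof of invariant~\ref{enum:runinvops}, and the constraint comes from \emph{run passing}, not from reshapement interference between sequent runs. When a run $S_2$ passes an oncoming run $S_1$, the passing can take up to $6$ rounds (the worst case of Figure~\ref{fig:crossconst}). One must guarantee that $S_2$ finishes this passing before it is forced to begin passing the next run $S_1^{\mathrm{succ}}$ sequent to $S_1$. Passing begins at distance $\leq\crossdistance$, and after the (up to) $6$ rounds the distance from $S_2$ to $S_1^{\mathrm{succ}}$ has shrunk to $D-9$, where $D$ is the sequent-run spacing; requiring $D-9\geq\crossdistance$ gives $D\geq 12$. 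Since the spacing between sequent runs can be as small as $L-1$ (operation~$c)$ costs one step at the start), this forces $L\geq 13$, and detecting that the spacing has dropped below $12$ requires viewing path length~$11$. Your proposed ``worst-case spacing computation'' is pointed at the wrong interaction: the tight case is a run against a pipeline of \emph{oncoming} runs during consecutive passings, not two sequent runs reshaping near each other. If you carried out the computation you outline, you would not recover the stated constants.
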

        Now, we can prove Lemma~\ref{lem:progresspair_merge}:
        \begin{proof}
            $\ref{enum:progpairmerge}):$
            At the time, a progress pair $S,S'$ is started, the subchain connecting both is a quasi line.
            Because of Lemma~\ref{lem:runinvariants}.$\ref{enum:runinvquasiline})$, this also does not change
            if other runs are located on this subchain.
            And also merges preserve the quasi line properties.
            So, if not stopped, $S$ and $S'$ keep moving towards each other (Lemma~\ref{lem:runinvariants}.$\ref{enum:runinvmovement})$).
            Because of Lemma~\ref{lem:runinvariants}.$\ref{enum:runinvgoodpairs})$ a merge can be performed at the latest when they meet, which takes at most $n$ rounds.

            It remains to show that $S,S'$ are not stopped by the algorithm's termination conditions (Table~\ref{table:runterminate}) before the merge could be performed.
            In the following, we check all these conditions:
            $\ref{enum:stoptooclose})$: By definition of a progress pair, no merge has been performed since the last time new runs have been started.
            This means, that the distance between $S$ and its next sequent runs in front of it is at least $L-1$ (cf.\ the proof of Lemma~\ref{lem:runinvariants}.$\ref{enum:runinvops})$) which is bigger than the viewing path length. (The same holds for $S'$.)
            So the runs of a progress pair cannot be stopped by this termination condition.

            $\ref{enum:stopendpoint})$: If a run of a progress pair can see an endpoint of the quasi line in front of it, then the other run of the progress pair must have previously been stopped.
            Because a run of a progress pair cannot be stopped by condition $\ref{enum:stoptooclose})$, it instead must have been stopped because of a merge.
            So, a termination because of condition $\ref{enum:stopendpoint})$ is allowed.

            $\ref{enum:stoppass},\ref{enum:stoplongop})$: If a run $S$ of a progress pair is stopped because of this, then its target corner
            was removed.
            This could either have happened because of the reshapement of a sequent run or because of a merge operation.
            Because we have ensured the minimum distance of two sequent runs to be large enough, the first case cannot happened.
            So, a merge must have been the reason.
            Then, this merge must have been enabled by another run $S^\star$, moving towards $S$. 
            If $S^\star$ is the partner run of $S$, then stopping is allowed.
            Else, because progress pairs are nested into each other, $S^\star$ either was not a run of a progress pair or its partner run has previously been stopped.
            In the latter case, because runs of progress pairs cannot be stopped because of condition $\ref{enum:stoptooclose})$,
            the run must have been stopped by an earlier merge.
            In both cases, the current merge can be credited to the progress pair of $S$.

            $\ref{enum:stopmerge})$: By the same arguments, this merge does not need to be also credited to a different progress pair.
            This then also proves $\ref{enum:progpairuniqueness})$ of the lemma.
        \end{proof}
        Now, we provide the omitted proof of Lemma~\ref{lem:runinvariants}.
        \begin{proof}
            $\ref{enum:runinvmovement})$: This directly follows by the run definition in Subsection~\ref{ssec:runs}.

            $\ref{enum:runinvquasiline})$: Cf.\ Figure~\ref{fig:ALG_hop_simple2}.
            $a)$ ensures that on horizontal quasi lines only horizontal subchains of lengths $>2$ are shortened and vertical subchains remain unextended.
            $b)$ does not reshape.
            $c)$ can only executed during the first three rounds after $S$ has been started.

            $\ref{enum:runinvvisseqrun})$: This is ensured by the run termination condition of Table~\ref{table:runterminate}.$\ref{enum:stoptooclose})$.

            $\ref{enum:runinvops})$: Cf.\ Figure~\ref{fig:ALG_hop_simple2}.
            All these operations ensure that if $S$ was located at some corner $c_1$ when the operation started,
            it afterwards either is located at some other corner $c_2$ such that both corners are rotated equally or terminates if the target
            corner has been removed during the operation (cf.~Table~\ref{table:runterminate}.$\ref{enum:stoplongop},\ref{enum:stoppass}))$.
            Then, because still located on a quasi line (cf.~$\ref{enum:runinvquasiline})$), again $a)$ or $b)$ can be applied or run passing is started.
            
            The run passing needs some closer look, because it interrupts other operations.
            In order to ensure a regulated behavior, we chose the distance between sequent runs big enough 
            such that a run does not have to execute a new run passing operation before it has finished its previous one.
            We settle the values for the constant $L$ and the viewing path length appropriately.
            We look at two sequent runs $S_1$ and $S_1^{\mathrm{succ}}$ such that $S_1^{\mathrm{succ}}$ has been started after $S_1$.
            Their distance $D$ is at least $L-1$.
            This value is achieved if at the start of $S_1$ the operation $c)$ (Figure~\ref{fig:ALG_hop_simple2}) was executed.
            
            Now, we chose the value for the constant $D$ big enough for ensuring that while some run $S_2$ is passing along $S_1$,
            $S_1^{\mathrm{succ}}$ becomes visible to $S_2$ the earliest when the passing operation with $S_1$ has been completed.
            Figure~\ref{fig:crossconst} shows an example for the longest possible duration of a run passing operation.
            Here, it takes $6$ rounds until $S_2$ has arrived at its target corner.
            Because the run passing operation starts when the distance between $S_1$ and $S_2$ is $\leq\crossdistance$, after the
            passing operation, the distance between $S_2$ and $S_1^{\mathrm{succ}}$ equals $D-9$.
            We want this then still be $\geq\crossdistance$.
            So we choose $D\geq 12$ and together with the above argumentation concerning the minimum distance between sequent runs
            follows $L\geq 13$.
            In order to detect that the distance has become smaller than $12$ (and solve this problem), the viewing path length must be $11$ (cf.\ Table~\ref{table:runterminate}.$\ref{enum:stoptooclose})$).

            $\ref{enum:runinvgoodpairs})$: When defining good pairs in Subsection~\ref{ssec:runs},
            good pairs have been characterized by the relative position of the outer direct neighbors of the good pair according to the quasi line.
            The first part of the proof of $\ref{enum:runinvops})$ finishes the proof.
        \end{proof}
\bibliography{references}
\end{document}